\DeclareMathOperator{\ext}{ext}
\providecommand{\keywords}[1]{\textbf{\textit{Keywords:}} #1}
\providecommand{\jel}[1]{\textbf{\textit{JEL Classifications:}} #1}
\newtheorem{theorem}{Theorem}[section]
\newaliascnt{proposition}{theorem}
\newtheorem{proposition}[proposition]{Proposition}
\newaliascnt{lemma}{theorem}
\newtheorem{lemma}[lemma]{Lemma}
\newaliascnt{corollary}{theorem}
\newtheorem{corollary}[corollary]{Corollary}
\newaliascnt{claim}{theorem}
\newtheorem{claim}[claim]{Claim}
\theoremstyle{definition}
\newaliascnt{definition}{theorem}
\newtheorem{definition}[definition]{Definition}
\newaliascnt{example}{theorem}
\newaliascnt{condition}{theorem}
\newtheorem{assumption}{Assumption}
\newaliascnt{remark}{theorem}
\newtheorem{remark}[remark]{Remark}
\newaliascnt{remarks}{theorem}
\newaliascnt{aside}{theorem}
\newaliascnt{note}{theorem}
\crefname{theorem}{theorem}{theorems}
\Crefname{theorem}{Theorem}{Theorems}
\crefname{proposition}{proposition}{propositions}
\Crefname{proposition}{Proposition}{Propositions}
\crefname{lemma}{lemma}{lemmas}
\Crefname{lemma}{Lemma}{Lemmas}
\crefname{corollary}{corollary}{corollaries}
\Crefname{corollary}{Corollary}{Corollaries}
\crefname{claim}{claim}{claims}
\Crefname{claim}{Claim}{Claims}
\crefname{definition}{definition}{definitions}
\Crefname{definition}{Definition}{Definitions}
\crefname{example}{example}{examples}
\Crefname{example}{Example}{Examples}
\crefname{assumption}{assumption}{assumptions}
\Crefname{assumption}{Assumption}{Assumptions}
\crefname{condition}{condition}{conditions}
\Crefname{condition}{Condition}{Conditions}
\crefname{question}{question}{questions}
\Crefname{question}{Question}{Questions}
\crefname{remark}{remark}{remarks}
\Crefname{remark}{Remark}{Remarks}
\crefname{remarks}{remarks}{remarks}
\Crefname{remarks}{Remarks}{Remarks}
\crefname{aside}{aside}{asides}
\Crefname{aside}{Aside}{Asides}
\crefname{note}{note}{notes}
\Crefname{note}{Note}{Notes}
\newcommand{\secref}[1]{\hyperref[#1]{\S\ref*{#1}}}
\definecolor{backcolour}{rgb}{0.63, 0.79, 0.95}
\lstdefinestyle{mystyle}{
  backgroundcolor=\color{backcolour},
  basicstyle=\ttfamily\footnotesize,
  breakatwhitespace=false,
  breaklines=true,
  captionpos=b,
  keepspaces=true,
  numbers=left,
  numbersep=5pt,
  showspaces=false,
  showstringspaces=false,
  showtabs=false,
  tabsize=2
}
\begin{document}
\author{Konstantin von Beringe \and Mark Whitmeyer \thanks{Arizona State University; \href{kvonberi@asu.edu}{kvonberi@asu.edu} and \href{mailto:mark.whitmeyer@gmail.com}{mark.whitmeyer@gmail.com}.}}
\title{Robust Welfare under Imperfect Competition}
\maketitle

\begin{abstract}
We study welfare analysis for policy changes when supply and demand behavior are only partially known. We augment the robust approach pioneered by \citet{KangVasserman2025} by incorporating the supply side. We posit intervals of feasible pass-through and conduct (market-power) parameters, then apply them to two equilibrium snapshots to characterize the extremal supply-side terms entering welfare. We show that the supply-side bounds are attained by inverse pass-through functions that take only the two endpoint values of the specified interval, separated by a single price cutoff. Combining these supply-side extrema with demand-side shape restrictions, we produce simple bounds for consumer surplus, producer surplus, total surplus, and deadweight loss.
\end{abstract}
\keywords{Welfare analysis, Welfare bounds, Robustness, Functional form assumptions; Pass-through; Conduct parameters; Market power}\\
\jel{C14; C51; D04; D61; F14; H21; Q48}

\section{Introduction}

Empirical evaluations of tax and regulatory policies often rest on functional-form assumptions that impute unobserved objects--most commonly, the demand curve connecting two observed price-quantity points. Recent work by \citet{KangVasserman2025} (\textbf{KV}) formalizes how to assess the robustness of welfare conclusions to such demand-side assumptions by asking how different the true curve must be from a maintained specification to overturn a qualitative conclusion (e.g., that a policy is net beneficial). Their framework shows that, holding fixed the observed endpoints and shape class, welfare is maximized or minimized by simple, canonical demand curves.

This paper builds on KV's work: we develop a \emph{supply-side} analog of robust welfare analysis for discrete (non-marginal) policy changes. Our starting point is the same minimal evidence set--two snapshots around a tax: \(\left(p_0,Q_0,\tau_0\right)\) and \(\left(p_1,Q_1,\tau_1\right)\)--but we add two weak restrictions on supply behavior without imposing a parametric cost function: (i) \emph{an interval of feasible pass-through rates} through which the unobserved price schedule \(p(\tau)\) can move, and (ii) \emph{an interval of feasible conduct (market-power) parameters} that link price-cost markups to the familiar marginal-revenue correction via the conduct identity introduced in \cite{bresnahan1982oligopoly} (see also \cite{genesovemullin1998} and \cite{weylfabinger2013}):
\[
p-\tau-\mathrm{MC}(Q)=\kappa(Q)\left[-P'(Q)Q\right],
\]
where \(\kappa(Q)\) is a conduct parameter nesting, e.g., perfect competition, monopoly, and (symmetric) Cournot competition.

These ingredients let us bound consumer surplus (\(\mathrm{CS}\)), producer surplus (\(\mathrm{PS}\)), total surplus (\(\mathrm{TS}\)), and deadweight loss (\(\mathrm{DWL}\)), with tax revenue (\(\mathrm{TR}\)) pinned down directly by the two observed snapshots. We remain agnostic about functional forms beyond the demand-shape classes developed by \citet{KangVasserman2025} and impose only very weak regularity on the unobserved price schedule.

The key technical observation is an accounting identity that isolates all dependence on the unobserved pass-through schedule in a single linear integral term:
\[
\int_{p_0}^{p_1}Q(p)u(p)dp,\quad \text{where}\quad u(p)\in[\alpha,\beta]\quad\text{and}\quad \int_{p_0}^{p_1}u(p)dp=\Delta\tau,
\]
where \(u\) is the inverse pass-through, i.e., the change-of-variables weight induced by the unobserved price schedule via \(d\tau=u(p)dp\), and \(\alpha\) and \(\beta\) are the inverses of the exogenously given pass-through bounds.

Because welfare objects are linear combinations of \(\int Qdp\), the schedule-weighted term \(\int Qudp\), and a linear conduct term, the overall problem is bilinear in \((Q,u)\). We show that the only relevant supply-side choice is how to place the inverse pass-through weight \(u\) across prices, subject to its bounds and its integral constraint. Since \(Q\) is decreasing, a simple (Hardy-Littlewood) rearrangement argument says that to maximize (minimize) \(\int Qudp\), we should place the high value of \(u\) where demand is largest (smallest). Thus, the extremal \(u\)s take only the endpoint values \(\alpha\) and \(\beta\), with a single cutoff in price. On the demand side, the selected inverse pass-through schedule and welfare object induce a weighted integral of transformed demand. If this induced weight has one sign, KV's extremal demand curves apply directly. 

If the weight changes sign once, we cannot appeal to KV; instead, we show that an extremum is attained by a transformed demand curve that EATS, standing for ``takes Endpoints with At most Two Switches.'' Informally, after writing demand in the units used for the chosen shape restriction--for example, levels, logs, or elasticity units--the slope of the transformed demand curve takes only the lower and upper values allowed by the restriction, and switches between them at most twice. Thus, even when the welfare weight changes sign, the demand-side problem reduces to a simple class of candidate curves. \secref{sec:demand} makes this definition formal.

Combining these steps delivers a sharp and tractable characterization. For each welfare object and conduct endpoint, a researcher needs only to evaluate the two extremal inverse pass-through schedules. Conditional on each schedule, the demand-side extremum is given by either the KV one or one that EATS, obtained from a single-dimensional optimization over the value of transformed demand at the sign-change point. Thus, we have a recipe that produces tight bounds for consumer surplus, producer surplus, total surplus, and deadweight loss (and any convex combination thereof), given a pick of demand-shape class and pass-through/conduct intervals.

Our treatment also clarifies the role of market power. The conduct identity can be rewritten as
\[
p-\tau-\mathrm{MC}(Q)=\kappa(Q)\left[-P'(Q)Q\right]=L(p)p,\qquad L(p)=\frac{\kappa(Q)}{\left|\eta(p)\right|},
\]
for the price-elasticity of demand \(\eta\), so conduct \(\kappa\) and inverse elasticity jointly govern the Lerner index \(L(p)\) over the price support. This yields transparent upper and lower producer surplus and total surplus bounds: lower welfare bounds use the upper conduct bound and the inverse pass-through schedule that maximizes the \(Q\)-weighted exposure to high \(u\), whereas upper welfare bounds use the lower conduct bound and the opposite inverse pass-through schedule. When one has external moments on elasticities or markups, they can be plugged directly into our formulas.

\paragraph{Roadmap.} We finish this section by discussing related work. Next, in \secref{sec:themodel} we introduce the model and go through preliminary results in \secref{sec:prelim}. In \secref{sec:supply} we tackle the supply-side problem; in \secref{sec:demand}, the demand-side; and we combine these analyses to get our main result in \secref{sec:mainresult}. All proofs omitted from the main text may be found in the Appendix (\ref{appendix}).

\subsection{Related Work}\label{sec:relatedwork}

Our paper contributes to a growing literature that studies welfare analysis under minimal functional-form assumptions. The closest antecedent is \citet{KangVasserman2025}, who develop \emph{robust} bounds for finite policy changes using shape restrictions on demand and show that familiar functional forms (linear, CES, exponential etc.) are extremal within various canonical families. We build on this program by bringing the supply side into the bounding exercise: we integrate conduct information (via Lerner and markup relationships) and bounds on pass-through into welfare formulas. In doing so, we deliver sharp bounds that combine (i) demand-shape extrema with (ii) extremal inverse pass-through functions consistent with our pre-specified pass-through intervals. A producer-surplus identity makes the roles of conduct and inverse pass-through transparent.

A second strand is the tax-incidence and pass-through literature under imperfect competition. \citet{weylfabinger2013} synthesize incidence principles beyond perfect competition and link global incidence to the integral of local pass-through (their various ``principles of incidence 5''). Our approach operationalizes this integral for policy changes by treating the (unobserved) tax schedule as a weighted integral over demand, with weights pinned down by bounds on pass-through. This makes explicit how pass-through heterogeneity over the price interval maps into welfare ranges. 

On the supply side, our conduct intervals relate to the new empirical IO tradition that infers market power and tests conduct. Foundational contributions include the Lerner index as a measure of monopoly power \citep{lerner1934} and the identification and testing of oligopoly conduct in both homogeneous and differentiated-product settings \citep{bresnahan1982oligopoly,nevo1998,genesovemullin1998}. We exploit the standard connection \(L=\kappa/|\eta|\) to translate bounds on conduct (or Lerner indices) and on elasticities into transparent wedges in our welfare accounting.

More broadly, our robustness perspective complements two adjacent literatures. First, the \textbf{sufficient-statistics} approach emphasizes using a few elasticities or reduced-form objects to conduct welfare analysis under maintained structure \citep{chetty2009}. Our results can be read as delivering ranges when only inequality-type information about these objects is credible. Second, the \textbf{partial identification} program (e.g., \citealp{manski2003}) advocates reporting identified sets rather than point estimates under weak assumptions. Our welfare bounds are a structural incarnation of this principle tailored to price-quantity data and policy shocks. Relatedly, robust pricing and mechanism-design work studies optimal behavior under model uncertainty, e.g., robust monopoly pricing under worst-case payoffs or regret \citep{bergemannschlag2011,condorelli2022lower,le2024robust}. We shift the focus from the seller's objective to social welfare under constrained knowledge of demand and supply. There are also other papers bounding objects of interest under monopoly and imperfect competition \citep{condorelli2022surplus,condorelli2022lower,le2024robust}.

On a technical level, our work relates to other recent papers that study linear or convex programs over spaces of measures. \citet{kleiner_majorization_2021} characterize the extreme points of functions that majorize or are majorized by a given function. Their companion paper \citep{kleiner2022extreme} studies the multidimensional variant of this problem, and \citet{nikzad_constrained_majorization_2022} studies \textit{constrained} majorization, with a particular emphasis on its use in mechanism design. \citet{yang_zentefis_monotone_intervals_2024} characterize the extreme points of ``monotone function intervals,'' using them to reduce persuasion/security-design problems to tractable LPs over measures; and \citet{augiasuhe2025} provide the extreme points of ``convex function intervals,'' with applications to persuasion, delegation, and screening. \citet{yangyang_multidim_monotonicity_2025} look at the extreme points of multidimensional monotone functions, and put their characterization to good use in mechanism design.

\section{The Model}\label{sec:themodel}

We observe two \textbf{Snapshots} around a \emph{specific} (per-unit) tax:
\[\left(p_0,Q_0,\tau_0\right)\quad\text{and}\quad \left(p_1,Q_1,\tau_1\right)\text{,}\] with
\[\Delta p \coloneqq p_1-p_0>0, \quad \Delta Q \coloneqq Q_1-Q_0<0,\quad \text{and} \quad \Delta\tau \coloneqq \tau_1-\tau_0>0\text{.}
\]
We seek \textbf{Bounds} on
\[
\Delta \mathrm{CS},\quad \Delta \mathrm{PS},\quad \Delta \mathrm{TS} \coloneqq \Delta \mathrm{CS}+\Delta \mathrm{PS}+\Delta \mathrm{TR},\quad \text{and} \quad \mathrm{DWL} \coloneqq -\Delta \mathrm{TS},
\]
for
\begin{enumerate}[label={(\roman*)},noitemsep,topsep=0pt]
\item a particular \textbf{Demand-shape class} between \(\left(p_0,Q_0\right)\) and \(\left(p_1,Q_1\right)\),
\item an interval of feasible \textbf{Pass-through} parameters, and
\item an interval of feasible \textbf{Conduct} (market-power) parameters.
\end{enumerate}

\paragraph{Demand.}
We assume that inverse demand \(p=P(Q)\) is strictly decreasing on the relevant range, with demand denoted \(Q(\cdot)=P^{-1}(\cdot)\). The change in (Marshallian) consumer surplus is equal to the area below the demand curve between \(p_0\) and \(p_1\):
\[
\Delta \mathrm{CS} = -\int_{p_0}^{p_1} Q(p) dp\text{.}
\]

\paragraph{Supply.}
Let \(p \colon \left[\tau_0,\tau_1\right]\to\left[p_0,p_1\right]\) be the (unobserved) equilibrium price as a function of the policy level (the specific tax). Basically, for every tax level between the two we observed, imagine the equilibrium price that would prevail holding everything else fixed, then collect those prices into a (regular) function \(p(\tau)\). We don’t observe it, so we will consider all such functions consistent with the endpoints and some mild other conditions and take the best and worst cases over that set when computing welfare bounds.
\begin{assumption}\label{ass:invpt}
\(p\) is absolutely continuous: the pass-through \(\rho(\tau) \coloneqq dp/d\tau\) exists a.e., and, moreover, satisfies
\[0 < \rho_L \leq \rho(\tau) \leq \rho_U<\infty \quad \text{a.e. on } \left[\tau_0,\tau_1\right]\text{.}\]
\end{assumption}
Absolute continuity guarantees \(p(\tau)=p_0+\int_{\tau_0}^{\tau}p'(s)ds\). The lower bound ensures \(p\) is a.e. increasing and (a.e.) invertible. We assume, furthermore, that supply-side behavior can be succinctly captured by the following equation, with assumed bounds on the feasible conduct parameters, i.e., the market structure.
\begin{assumption}\label{ass:conduct}
For all \(Q\in\left[Q_1,Q_0\right]\),
\[\label{eq:conduct}\tag{\(1\)}
p-\tau - \mathrm{MC}(Q) = \kappa(Q)\left[-P'(Q) Q\right],\qquad \text{with} \qquad \kappa(Q)\in\left[\kappa_L,\kappa_U\right]\subseteq[0,1]\text{.}\]
\end{assumption}
This formula originates in \citet{bresnahan1982oligopoly} and is also used and discussed in, e.g., \citet{genesovemullin1998}, \citet{corts1999conduct}, \citet{weylfabinger2013}, and \citet{matsumura2023resolving}.
It nests perfect competition (\(\kappa = 0\)), symmetric \(n\)-firm Cournot competition (\(\kappa \approx 1/n\)), and monopoly/collusion (\(\kappa = 1\)). Moreover, under this conduct identity, the Lerner index satisfies
\[L(p) = \frac{p-\tau-\mathrm{MC}(Q)}{p} = \frac{\kappa(Q)}{\left|\eta(p)\right|}\text{,}
\]
where \(\eta(p)\) is the price-elasticity of demand at price \(p\). To see this, observe that
\[
\frac{dQ}{dp} = \frac{1}{P'(Q)} \quad \Rightarrow \quad 
\left|\eta(p)\right| = -\frac{1}{P'(Q)} \frac{p}{Q}
\quad \Rightarrow \quad
- P'(Q) Q = \frac{p}{\left|\eta(p)\right|}\text{.}
\]
Then, simply plug this into \eqref{eq:conduct} and divide by \(p\).

In more detail, the term \(-P'(Q) Q\) is the familiar ``marginal-revenue correction'' from market power: with downward-sloping demand, marginal revenue equals \(p + P'(Q) Q\).
Rewriting \(-P'(Q) Q\) as \(p/\left|\eta\right|\) puts this correction on the elasticity scale at the observed \((p,Q)\).
The conduct identity says that the per-unit net-of-tax markup \(p-\tau-\mathrm{MC}(Q)\) equals a conduct factor \(\kappa(Q)\) times that \(\mathrm{MR}\) correction.
Normalizing by the consumer price \(p\) produces a normalized markup index:
\[
\underbrace{\frac{p-\tau-\mathrm{MC}(Q)}{p}}_{\text{markup as a share of price}}
=
\underbrace{\kappa(Q)}_{\text{market power / conduct}}
\cdot
\underbrace{\frac{1}{\left|\eta(p)\right|}}_{\text{inverse elasticity}}\text{.}
\]
Thus, higher conduct (more market power) or lower elasticity (more inelastic demand) both raise the Lerner index.

We highlight the following special cases:
\begin{itemize}[noitemsep,topsep=0pt]
\item \textbf{Perfect competition} (\(\kappa \equiv 0\)): \(L\equiv 0\) so that \(p=\tau+\mathrm{MC}(Q)\).
\item Monopoly \((\kappa\equiv 1)\): \(L=1/|\eta|\) under the gross-price normalization used here.
\item \textbf{Symmetric Cournot with \(n\) firms}: \(\kappa \equiv 1/N\) so that \(L=(1/N)/\left|\eta\right|\). 
More generally, with heterogeneous shares the conduct proxy equals the Herfindahl \(H\), yielding \(L=H/\left|\eta\right|\).
\end{itemize}

\subsection{Preliminary Results}\label{sec:prelim}

We begin by stating and reformulating the central objects of our exercise. In particular, we single out the crucial decomposition of producer surplus that engenders our bounds. For a finite specific-tax change,
\[\label{eq:rev}\tag{\(2\)}
\Delta \mathrm{TR}=\tau_1 Q_1-\tau_0 Q_0\text{.}\]

Next, define 
\[
u(p) \coloneqq \frac{d\tau}{dp}=\frac{1}{\rho(\tau(p))}\text{.}\]
By \Cref{ass:invpt} (and the Fundamental Theorem of Calculus)
\[
u(p)\in[\alpha,\beta] \coloneqq \left[\frac{1}{\rho_U},\frac{1}{\rho_L}\right]\quad\text{a.e. on }\left[p_0,p_1\right],
\quad \text{and} \quad
\int_{p_0}^{p_1} u(p) dp=\Delta\tau\text{.}\]
As we will shortly derive, one important component of the change in producer surplus is the cumulative, direct hit to producers from a tax increase, \(\int_{\tau_0}^{\tau_1} Q(p(\tau)) d\tau\). It can conveniently be written as an integral over prices instead, using the inverse pass-through to change scale:
\begin{remark}\label{rem:cov}
\[
\int_{\tau_0}^{\tau_1} Q(p(\tau)) d\tau = \int_{p_0}^{p_1} Q(p) u(p) dp\text{.}\]
\end{remark}
Toward deriving our formula for the change in producer surplus, let profits at policy \(\tau\) be
\[
\Pi(\tau)=(p(\tau)-\tau) Q(\tau) - \int_0^{Q(\tau)} \mathrm{MC}(q) dq\text{.}
\]
The following lemma states our central observation noted in the introduction. Namely, pass-through affects producer surplus only through a single integral term, the ``inverse-pass-through term'' in \eqref{eq:PSmaster}.
\begin{lemma}\label{lem:PS}
Under \Cref{ass:invpt,ass:conduct},
\[\label{eq:PSmaster}\tag{\(3\)}\Delta \mathrm{PS}
= \underbrace{\int_{p_0}^{p_1} Q(p) dp}_{-\Delta\text{CS}}
 - \underbrace{\int_{p_0}^{p_1} Q(p) u(p) dp}_{\text{inverse-pass-through term}}
 - \underbrace{\int_{p_0}^{p_1}\kappa(Q) Q(p) dp}_{\text{conduct/markup}}\text{.}\]
\end{lemma}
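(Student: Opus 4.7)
The plan is to compute $\Pi'(\tau)$, substitute in the conduct identity to eliminate $\mathrm{MC}$, and then change variables from $\tau$ to $p$ using Assumption~\ref{ass:path} and Remark~\ref{rem:cov}.

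First I would differentiate the profit function. Writing $Q(\tau)$ as shorthand for $Q(p(\tau))$, the Fundamental Theorem of Calculus applied to the cost term gives
\[
\Pi'(\tau) = (p'(\tau)-1) Q(\tau) + \bigl(p(\tau)-\tau - \mathrm{MC}(Q(\tau))\bigr) Q'(\tau)\text{.}
\]
Absolute continuity of $p(\cdot)$ from Assumption~\ref{ass:path} guarantees this holds almost everywhere, and it makes $\Pi$ itself absolutely continuous so that $\Delta \mathrm{PS} = \int_{\tau_0}^{\tau_1} \Pi'(\tau) d\tau$.

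Next I would invoke the conduct identity (Assumption~\ref{ass:conduct}) to replace the markup: $p - \tau - \mathrm{MC}(Q) = -\kappa(Q) P'(Q) Q$. Combined with the chain rule $Q'(\tau) = p'(\tau)/P'(Q)$ (valid a.e., since $p(\cdot)$ is a.e. differentiable and $P$ is strictly decreasing), the second summand telescopes beautifully:
\[
\bigl(p-\tau-\mathrm{MC}(Q)\bigr) Q'(\tau) = -\kappa(Q) P'(Q) Q \cdot \frac{p'(\tau)}{P'(Q)} = -\kappa(Q) Q(\tau) p'(\tau)\text{.}
\]
This is the critical algebraic step: the marginal-revenue term and the denominator from the chain rule cancel, leaving a clean expression that is linear in $p'(\tau)$.

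Integrating over $[\tau_0,\tau_1]$ then yields
\[
\Delta \mathrm{PS} = \int_{\tau_0}^{\tau_1} p'(\tau) Q(\tau) d\tau - \int_{\tau_0}^{\tau_1} Q(\tau) d\tau - \int_{\tau_0}^{\tau_1} \kappa(Q) Q(\tau) p'(\tau) d\tau\text{.}
\]
A standard change of variables $p=p(\tau)$, justified by absolute continuity and the two-sided bound $\rho(\tau)\in[\rho_L,\rho_U]$, converts the first and third integrals into $\int_{p_0}^{p_1} Q(p) dp$ and $\int_{p_0}^{p_1} \kappa(Q(p)) Q(p) dp$, respectively. The middle integral is exactly $\int_{p_0}^{p_1} Q(p) u(p) dp$ by Remark~\ref{rem:cov}. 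Collecting these three terms gives the decomposition in \eqref{eq:PSmaster}.

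The only delicate step is the justification of the chain rule and change of variables under mere absolute continuity; I expect this is the main (though routine) obstacle, and it is dispatched by observing that Assumption~\ref{ass:path} makes $p(\cdot)$ a bi-Lipschitz homeomorphism between $[\tau_0,\tau_1]$ and $[p_0,p_1]$, so composition with the smooth demand function $Q(\cdot)$ preserves absolute continuity and the usual substitution formulas apply.
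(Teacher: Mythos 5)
Your proposal is correct and follows essentially the same route as the paper's proof: differentiate $\Pi(\tau)$, substitute the conduct identity together with the chain rule $Q'(\tau)=p'(\tau)/P'(Q)$ so that the $P'(Q)$ factors cancel, integrate over $[\tau_0,\tau_1]$, and convert each integral to the price variable via $dp=\rho\, d\tau$ and Remark~\ref{rem:cov}. Your added remarks on absolute continuity of $\Pi$ and the bi-Lipschitz property of $p(\cdot)$ are a slightly more careful justification of the change of variables than the paper gives, but the argument is the same.
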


Think of a per-unit tax change as moving along a curve in \((p,\tau)\)-space. Producer surplus moves for three distinct reasons, and \Cref{lem:PS} simply tallies them. First, when price moves by \(dp\), holding the instant's quantity \(Q\) fixed, producers gain the consumer-to-producer transfer \(Qdp\) accumulated as price rises from \(p_0\) to \(p_1\), \(\int Q dp=-\Delta \mathrm{CS}\). Second, the tax itself takes a bite: each instant you levy \(d\tau\) on \(Q\) units, so producers lose \(\int Q d\tau\) (which we rewrite using \Cref{rem:cov}). Third, as \(p\) changes, firms re-optimize quantity along demand: the value of the marginal units that disappear equals the markup on the margin times \(dQ\). Using the conduct identity \eqref{eq:conduct}, this contribution is \(-\int \kappa(Q) Q dp\). Intuitively, stronger conduct \(\kappa\) absorbs more of the potential transfer from higher prices because output contracts more.

We finish this section with some accounting:
\[\Delta \mathrm{TS}_{\mathrm{priv}} \coloneqq \Delta \mathrm{CS}+\Delta \mathrm{PS}
= -\int_{p_0}^{p_1} Q u dp - \int_{p_0}^{p_1}\kappa(Q) Q dp\text{,}\]
\[\Delta \mathrm{TS} \coloneqq \Delta \mathrm{CS}+\Delta \mathrm{PS}+\Delta \mathrm{TR}
= \Delta \mathrm{TR} - \int_{p_0}^{p_1} Q u dp - \int_{p_0}^{p_1}\kappa(Q) Q dp\text{,}\]
and
\[\mathrm{DWL} \coloneqq  -\Delta \mathrm{TS}
= \int_{p_0}^{p_1} Q u dp + \int_{p_0}^{p_1}\kappa(Q) Q dp - \Delta \mathrm{TR}\text{.}\]
Note that \(\Delta \mathrm{TR}\) is given exactly by \eqref{eq:rev}, \(\Delta \mathrm{CS}\) depends only on the curve \(Q(\cdot)\), and the only dependence on inverse pass-through is the term \(\int Q u dp\).

It is straightforward to rewrite the change in producer surplus as 
\[
\Delta \mathrm{PS} = \int_{p_0}^{p_1} Q(p) dp  -  \int_{p_0}^{p_1} Q(p) u(p) dp  - \int_{p_0}^{p_1} L(p) \left|\eta(p)\right| Q(p) dp\text{,}
\]
since \(L(p)\left|\eta(p)\right|=\kappa(Q)\) pointwise.
This interchangeability is useful when one has external information on Lerner indices or demand elasticities instead of direct bounds on \(\kappa(\cdot)\) (say).

\section{The Supply-Side}\label{sec:supply}

Fix any (weakly) decreasing and integrable demand curve \(Q \colon \left[p_0,p_1\right]\to\mathbb{R}_+\). From the decomposition in \eqref{eq:PSmaster}, we see that the only contribution of \(u\) to any of the welfare objects of interest is the term \(\int Q u dp\). Accordingly, it suffices to compute bounds only by manipulating this term.

We define
\[
\mathcal{U} \coloneqq \left\{u\in L^\infty\left(\left[p_0,p_1\right]\right) \colon \ \alpha\le u\le\beta \ \text{a.e.},\ \int_{p_0}^{p_1}u dp=\Delta\tau\right\}\text{,}
\]
and assume that \(\Delta\tau \in \left[\alpha\Delta p, \beta\Delta p\right]\) (or else \(\mathcal U=\varnothing\)). Note that if \(\alpha=\beta\), then \(\mathcal U\neq\emptyset\) if and only if \(\Delta\tau=\alpha\Delta p\), in which case \(u=\alpha\) a.e. must be the inverse pass-through function. In the remainder of this section, we assume \(\alpha<\beta\).

Define \[F(u) \coloneqq \int_{p_0}^{p_1} Q(p) u(p) dp\text{.}\] Our first result identifies the simple form extremal, and, thus, bound-producing inverse pass-through functions take. Define
\[
h\coloneqq\frac{\Delta\tau-\alpha\Delta p}{\beta-\alpha}\in\left[0,\Delta p\right],
\]
so that
\[
\beta \left(p^*-p_0\right)+\alpha\left(p_1-p^*\right)=\Delta\tau
\qquad \Longleftrightarrow \qquad
p^* = p_0+h\text{,}
\]
and
\[
\alpha \left(p^*-p_0\right)+\beta\left(p_1-p^*\right)=\Delta\tau
\qquad \Longleftrightarrow \qquad
p^* = p_1 - h\text{.}
\]
\begin{definition}
We say that \(u\) is \emph{\(\beta\)-\textbf{B}oundary-value \textbf{A}t a \textbf{T}hreshold \textbf{S}witch (\(u\) is \(\beta\)-BATS)} if, on \(\left[p_0,p_1\right]\), \(u\) equals \(\beta\) up to threshold \(p_0 + h\) and then equals \(\alpha\) thereafter. In turn, we say that \(u\) is \emph{\(\alpha\)-BATS} if it equals \(\alpha\) up to \(p_1-h\), then \(\beta\) thereafter.
\end{definition}

\begin{proposition}\label{thm:invpt-extremal}
\(F\) is maximized (minimized) over \(\mathcal U\) by the \(\beta\)-BATS (\(\alpha\)-BATS) \(u\).
\end{proposition}
That is,
\[
u_{\max}(p)=
\begin{cases}
\beta,&p\in\left[p_0,p_0+h\right),\\
\alpha,&p\in\left[p_0+h,p_1\right],
\end{cases}
\qquad
\text{and} \qquad
u_{\min}(p)=
\begin{cases}
\alpha,&p\in\left[p_0,p_1-h\right),\\
\beta,&p\in\left[p_1-h,p_1\right].
\end{cases}\]
Of course, if \(Q\) has flat portions, other optimizers may exist.

Hence, writing \[\tag{\(4\)}\label{eq4}I^{\omega}_1 \coloneqq \int_{p_0}^{p_{\omega}^*}Q(p) dp \quad \text{and} \quad I^{\omega}_2 \coloneqq \int_{p_{\omega}^*}^{p_1}Q(p) dp \quad \left(\omega \in \left\{\min, \max\right\}\right)\text{,}\]we have
\[
\int_{p_0}^{p_1} Q u_{\max} dp=\beta I^{\max}_1+\alpha I^{\max}_2,\quad \text{and} \quad
\int_{p_0}^{p_1} Q u_{\min} dp=\alpha I^{\min}_1+\beta I^{\min}_2\text{.}
\]

The proof is a rearrangement argument, and these extrema take a simple form: \(u(p)\) travels between a high level \(\beta\) and a low level \(\alpha\) along the price axis \(p\in\left[p_0, p_1\right]\). The budget constraint \(\int u=\Delta\tau\) constrains how much ``high time'' we are allowed, but not where to place it. But this is easy, when we are maximizing the objective \(F\), its form rewards allocating the high level where the demand \(Q\) is large. Since demand is (weakly) decreasing, the best use of the scarce ``high time'' is to allocate it to the left (low \(p\)), where \(Q\) is highest. This is the bathtub principle: if we pour a fixed volume of water into a tub whose bottom slopes downward from left to right (here \(Q\) is larger on the left), we fill the deep end first. The result is a single waterline (a cutoff \(p^{*}\)): to the left of \(p^{*}\) we are at \(\beta\), to the right we are at \(\alpha\), and the budget pins down \(p^{*}\). The mirrored logic applies to the minimization task.

Moreover, if an optimizer is not of this threshold form, we can improve it by a tiny swap. Suppose there is a left location with higher weight \(Q_{\mathrm{left}}\) where \(u\) currently equals \(\alpha\), and a right location with lower weight \(Q_{\mathrm{right}}\) where \(u\) equals \(\beta\). Let's move an \(\varepsilon\)-length slice of \(\beta\)-mass from the right to the left and--as we need to keep \(\int u\) fixed--move an \(\varepsilon\)-slice of \(\alpha\)-mass from the left to the right. The objective changes by
\[
\Delta F = \varepsilon(\beta-\alpha)\left(Q_{\mathrm{left}}-Q_{\mathrm{right}}\right) \ge 0\text{,}
\]
strictly so whenever \(Q_{\mathrm{left}}>Q_{\mathrm{right}}\). Iterating these swaps pushes all \(\beta\)-mass leftward and all \(\alpha\)-mass rightward until a single cutoff remains. If \(Q\) is flat at the threshold, swaps within that plateau neither help nor hurt, explaining the possible non-uniqueness there (if \(Q\) is strictly decreasing a.e., the cutoff is unique).

Plugging \(u_{\min}\) into the producer-surplus formula produces
\[\Delta \mathrm{PS} = I^{\min}_1 (1-\alpha) + I^{\min}_2 (1-\beta) - \int \kappa(Q) Qdp\text{.}\]
Furthermore, \(\kappa(Q)\) lies in the interval \(\left[\kappa_L,\kappa_U\right]\) and \(Q(p) \geq 0\) pointwise, hence, the optimizing \(\kappa\) is constant, so the upper \(\mathrm{PS}\) bound is 
\[\Delta \mathrm{PS} =(1-\kappa_L-\alpha)I^{\min}_1+(1-\kappa_L-\beta)I^{\min}_2\text{.}\]
In turn, the lower bound exchanges \(\alpha\) and \(\beta\) and \(\kappa_L\) with \(\kappa_U\):
\[\Delta \mathrm{PS} =(1-\kappa_U-\beta)I^{\max}_1+(1-\kappa_U-\alpha)I^{\max}_2\text{.}\]

Thus, we have
\[-(\beta I^{\max}_1+\alpha I^{\max}_2)-\kappa_U \left(I^{\max}_1 + I^{\max}_2\right) \leq \Delta \mathrm{TS}_{\mathrm{priv}} \leq -(\alpha I^{\min}_1+\beta I^{\min}_2)-\kappa_L \left(I^{\min}_1 + I^{\min}_2\right)\text{,}\]
and
\[(\alpha I^{\min}_1+\beta I^{\min}_2)+ \kappa_L \left(I^{\min}_1 + I^{\min}_2\right) -\Delta \mathrm{TR} \leq \mathrm{DWL} \leq (\beta I^{\max}_1+\alpha I^{\max}_2)+\kappa_U \left(I^{\max}_1 + I^{\max}_2\right)-\Delta \mathrm{TR}\text{.}\]

\section{The Demand-Side Results}\label{sec:demand}

In our demand-side analysis, we mirror our supply-side agnosticism. Following KV, we do not make parametric assumptions about the true demand curve, but instead study the collection of demand curves that satisfy various shape restrictions. Namely, we posit two strictly-increasing and continuously differentiable functions \(A(q)\) and \(B(p)\) (that satisfy further properties to be specified shortly) and define a function \(G\coloneqq A\circ Q\) as
\[
G\left(s\right)\coloneqq A\left(Q\left(p\left(s\right)\right)\right),\quad \text{for }s\in\left[s_{0},s_{1}\right]=\left[B\left(p_{0}\right),B\left(p_{1}\right)\right].
\]
Write \(Q_{0}\coloneqq Q\left(p_{0}\right)\) and \(Q_{1}\coloneqq Q\left(p_{1}\right)\).

Economically, \(B\) picks the ``units'' along the price axis (levels versus logs, etc.), while \(A\) picks the ``units'' along the quantity axis (levels, logs, etc.). We then impose shape restrictions on \(G\) by assuming that the derivative of \(G\) lies in a specified interval. Importantly, \(A\) and \(B\) are simply ways of reparametrizing demand so that shape information is stated in the most natural units.

The researcher, thus, picks \(A\) and \(B\) to align the constraint with the empirical object she trusts: if she wants to make assumptions on the slope of demand, she posits \(A\left(q\right)=q\) and \(B\left(p\right)=p\), so that \(G'\left(s\right)=dQ/dp\). If she wants to assume an interval of elasticities, \(A\left(q\right)=\log q\) and \(B\left(p\right)=\log p\), so that \(G'\left(s\right)=d\log Q/d\log p\). For a semi-elasticity (percent change in quantity for a change in price) interval, \(A\left(q\right)=\log q\), and \(B\left(p\right)=p\), so that \(G'\left(s\right)=d\log Q/dp\); and for the opposite (change in quantity for a percent change in price), \(A\left(q\right)=q\) and \(B\left(p\right)=\log p\), so that \(G'\left(s\right)=dQ/d\log p\). The point is that researchers typically have outside evidence or priors about one of these objects (e.g., an elasticity range from prior studies, or a local slope range from IV).

\subsection{Standing Assumptions}

We impose the following additional technical assumptions concerning \(A\) and \(B\). Write \(Q_i\coloneqq Q\left(p_i\right)\) for \(i\in\left\{0,1\right\}\), and let \(\mathcal Q\coloneqq\left[Q_1,Q_0\right] \subset \mathbb{R}_{++}\) (with \(Q_1 < Q_0\)). We specify that \(A\) is \(C^1\) and strictly increasing on an open interval containing \(\mathcal Q\), with \(A'\left(q\right)>0\) throughout that interval; and \(B\) is \(C^1\) and strictly increasing on an open interval containing \(\left[p_0,p_1\right]\), with \(B'\left(p\right)>0\) throughout that interval.

\subsection{Demand-Side Analysis}

Following KV, we fix a pair \(A\) and \(B\). KV characterize extremal demand curves in two shape classes. In contrast, our main analysis specializes to their gradient-interval class (Assumption 1 in their paper).

\begin{assumption}\label{ass:kv-gradient}
Given constants \(\underline f\le\overline f\le0\), the gradient class consists of demand curves \(Q\colon\left[p_0,p_1\right] \to \mathcal{Q}\) such that \(G\) is absolutely continuous on the interval \(\left[s_0,s_1\right]\), \(G\left(s_i\right)=A\left(Q_i\right)\) for \(i\in\left\{0,1\right\}\), and
\(\underline f\le G'(s)\le\overline f
\quad\text{a.e. on }\left[s_0,s_1\right]\).
\end{assumption}
Equivalently, in \(p\)-space, this last restriction can be written a.e. as
\[G'\left(B\left(p\right)\right) = \frac{A'\left(Q\left(p\right)\right)Q'\left(p\right)}{B'\left(p\right)}
 \in \left[\underline f,\overline f\right].\]

In our environment, for fixed \(u\) and \(Q\), each object of interest is linear in \(\kappa\) and \(\kappa\) does not change the sets of \(u\)s or \(Q\)s we optimize over. Thus, welfare bounds occur when \(\kappa\) equals one of the endpoints of the feasible interval, and we set \(\kappa(p)=\kappa\in\left[\kappa_L,\kappa_U\right]\) for all \(p\).

From \Cref{thm:invpt-extremal}, the two extremal \(u\)s satisfy
\[
\int_{p_0}^{p_1}Q(p)u_{\max}(p)\,dp
=\alpha\int_{p_0}^{p_1}Q(p)\,dp+\left(\beta-\alpha\right)\int_{p_0}^{p_{\max}^{*}}Q(p)\,dp,
\]
and
\[
\int_{p_0}^{p_1}Q(p)u_{\min}(p)\,dp
=\alpha\int_{p_0}^{p_1}Q(p)\,dp+\left(\beta-\alpha\right)\int_{p_{\min}^{*}}^{p_1}Q(p)\,dp.
\]
Thus, after restricting \(u\) to one extremal branch \(\omega\in\left\{\max,\min\right\}\), any welfare object listed above can be written as
\[
W_\omega(Q)=C_0+a_0\int_{p_0}^{p_1}Q(p)\,dp+a_1\int_{K_\omega}Q(p)\,dp,
\]
where
\[K_{\max}\coloneqq\left[p_0,p_{\max}^{*}\right] \qquad \text{and} \qquad K_{\min}\coloneqq\left[p_{\min}^{*},p_1\right].\]
The coefficients are
\[\begin{array}{c|ccc}
W&C_0&a_0&a_1\\
\hline
\Delta\mathrm{TR}&\Delta\mathrm{TR}&0&0\\
\Delta\mathrm{CS}&0&-1&0\\
\Delta\mathrm{PS}&0&1-\alpha-\kappa&\alpha-\beta\\
\Delta\mathrm{TS}_{\mathrm{priv}}&0&-\alpha-\kappa&\alpha-\beta\\
\Delta\mathrm{TS}&\Delta\mathrm{TR}&-\alpha-\kappa&\alpha-\beta\\
\mathrm{DWL}&-\Delta\mathrm{TR}&\alpha+\kappa&\beta-\alpha
\end{array}\]
where \(C_0\) does not depend on the demand curve.

Next, as \(dp=ds/B'\left(p(s)\right)\) and \(Q(p)=A^{-1}\left(G\left(B(p)\right)\right)\),
\[
\int_{p_0}^{p_1}Q(p)\,dp=\int_{s_0}^{s_1}\frac{1}{B'\left(p(s)\right)}A^{-1}\left(G(s)\right) ds.
\]
Writing \(s_\omega^{*}\coloneqq B\left(p_\omega^{*}\right)\), we have
\[
B\left(K_{\max}\right)=\left[s_0,s_{\max}^{*}\right],
\qquad \text{and} \qquad
B\left(K_{\min}\right)=\left[s_{\min}^{*},s_1\right].
\]
Thus, for \(\omega\in\left\{\max,\min\right\}\),
\[
W_\omega=C_0+\int_{s_0}^{s_1}\varphi_\omega(s)A^{-1}\left(G(s)\right) ds,
\qquad \text{and} \qquad
\varphi_\omega(s)\coloneqq\frac{a_0+a_1\mathbf 1_{B\left(K_\omega\right)}(s)}{B'\left(p(s)\right)}.
\]
For the remainder of this subsection, we fix a welfare object, \(\omega\), and \(\kappa\), and write \(\varphi\) for \(\varphi_\omega\).

Crucially, \(\varphi\) changes sign at most once.\footnote{If the welfare measure is \(\Delta PS\), then \(\varphi\) changes sign if and only if \(1-\beta<\kappa<1-\alpha\), in which case \(\varphi>0\) on \(B(K_\omega)^c\) and \(\varphi<0\) on \(B(K_\omega)\). At the boundary values \(\kappa=1-\beta\) or \(\kappa=1-\alpha\), \(\varphi\) is zero on one of these regions and has a constant sign on the other. These weakly one-signed cases are covered by the one-signed argument in \Cref{rem:onesigned}.} If \(\varphi\) does not change sign, we can appeal to Proposition 1 in \cite{KangVasserman2025}, which states that the extreme demand curves are pointwise maximal or minimal, and are therefore optimal here as well. Naturally, for \(\varphi\le0\) a.e., we multiply the objective by \(-1\) and swap \(\max\) and \(\min\).

\begin{remark}\label{rem:onesigned}
Fix a welfare object, a branch \(\omega\), and \(\kappa\). If \(\varphi\) does not change sign, welfare bounds are given by the corresponding KV extremal demand curves.
\end{remark}

Let \(L\coloneqq s_1-s_0\) and \(\Delta A\coloneqq A\left(Q_1\right)-A\left(Q_0\right)\). For \(f\in L^\infty\left(\left[s_0,s_1\right]\right)\),\footnote{The space \(L^\infty\left(\left[s_0,s_1\right]\right)\) is the space of essentially bounded measurable functions on \(\left[s_0,s_1\right]\).} we define
\[
G_f(s)\coloneqq A\left(Q_0\right)+\int_{s_0}^{s}f(t)\,dt,
\qquad \text{and} \qquad
I(f)\coloneqq\int_{s_0}^{s_1}\varphi(s)A^{-1}\left(G_f(s)\right) ds,
\]
and
\[
\mathcal F\coloneqq\left\{f\in L^\infty\left(\left[s_0,s_1\right]\right)\colon\underline f\le f\le\overline f\ \text{a.e.},\ G_f(s_1)=A\left(Q_1\right)\right\}.
\]
We assume that \(\mathcal F\neq\emptyset\); equivalently, \(\underline fL\le\Delta A\le\overline fL\). Moreover, if \(\underline f=\overline f\), the result is immediate. Consequently, we assume that \(\underline f<\overline f\) and set
\[m\coloneqq\frac{\Delta A-\underline fL}{\overline f-\underline f}\in\left[0,L\right].\]

\begin{definition}
    We say that \(\varphi\) is \textit{single-crossing} if \(\varphi\in L^1\left(\left[s_0,s_1\right]\right)\), \(\varphi\neq0\) a.e., and there exists \(s_\varphi\in\left(s_0,s_1\right)\) such that either
\[\varphi>0\ \text{a.e. on }\left[s_0,s_\varphi\right),
\qquad \text{and} \qquad
\varphi<0\ \text{a.e. on }\left(s_\varphi,s_1\right],\]
or the reverse.
\end{definition}

For \(f\in\mathcal F\), write
\[
S_f\coloneqq\left\{s\in\left[s_0,s_1\right]\colon f(s)=\overline f\right\}.
\]
\begin{definition}
    We say that \(f\) \textit{takes \textbf{E}ndpoints with \textbf{A}t most \textbf{T}wo \textbf{S}witches}, \(f\) ``EATS,'' if there exist \(t_1,t_2\in\left[s_0,s_1\right]\), with \(t_1\le t_2\), and \(c_0,c_1,c_2\in\left\{\underline f,\overline f\right\}\) such that, a.e.,
\[
f(s)=c_0\ \text{on }\left[s_0,t_1\right],
\qquad
f(s)=c_1\ \text{on }\left(t_1,t_2\right),
\qquad \text{and} \qquad
f(s)=c_2\ \text{on }\left[t_2,s_1\right].
\]
\end{definition}

\begin{proposition}\label{prop:ass3}
If  \(\varphi\) is single-crossing, then both extrema of \(I\) over \(\mathcal F\) are attained by some \(f^*\) that EATS, with \(\lambda\left(S_{f^*}\right)=m\).
\end{proposition}

The proof of \Cref{prop:ass3} is in the appendix. Note that KV also conduct their exercise when there are curvature restrictions on \(G\), under the additional smoothness needed for that class. When the welfare weight \(\varphi\) is one-signed, we can port their curvature-class extremal curves directly, as the welfare objective is monotone (in the pointwise sense). The sign-changing case is different, and we do not tackle this challenging problem.

The proof begins by separating the choice of the demand curve into two pieces. The value that links the left and right sides is the value of the transformed demand curve at the point where the weight \(\varphi\) changes sign. We fix this value and write \(\gamma\coloneqq G_f\left(s_\varphi\right)\). Importantly, once \(\gamma\) is fixed, the left side of the curve only has to connect \(\left(s_0,A\left(Q_0\right)\right)\) to \(\left(s_\varphi,\gamma\right)\), while the right side only has to connect \(\left(s_\varphi,\gamma\right)\) to \(\left(s_1,A\left(Q_1\right)\right)\), always respecting the same derivative bounds. We prove a lemma (\Cref{lem:grad-envelope}), which tells us exactly which curve (satisfying our other restrictions) is highest and which is lowest on each of these two subintervals.

This comparison is useful because the sign of \(\varphi\) tells us whether we want \(G\) to be high or low. Since \(A^{-1}\) is increasing, raising \(G(s)\) raises the quantity term \(A^{-1}\left(G(s)\right)\). Therefore, where \(\varphi\) is positive, the objective is improved by taking the highest feasible curve, and where \(\varphi\) is negative, the objective is improved by taking the lowest one. Thus, conditional on \(\gamma\), the optimal curve is obtained by gluing together the appropriate upper and lower envelopes from the lemma (\Cref{lem:grad-envelope}).

The glued curve has a simple derivative pattern. If \(\varphi\) is positive on the left and negative on the right, the curve uses the upper envelope on the left and the lower envelope on the right, so its derivative follows the pattern \(\overline f,\underline f,\overline f\), up to degenerate pieces and null sets. If the signs are reversed, the derivative follows the pattern \(\underline f,\overline f,\underline f\), again up to degenerate pieces and null sets. In either case, the derivative uses only the endpoint values and switches at most twice.

It remains to choose the best value of \(\gamma\). The feasible values of \(\gamma\) form a nonempty compact set \(\Gamma\), and the value of the objective along the constructed curve varies continuously with \(\gamma\). In fact, the constructed objective is Lipschitz in \(\gamma\), because the envelopes are Lipschitz in \(\gamma\) and \(A^{-1}\) is Lipschitz on the relevant compact range. Therefore, some \(\gamma^*\in\Gamma\) attains the maximum. The corresponding derivative \(f^*\coloneqq f_{\gamma^*}\) EATS. Finally, because \(f^*\) only takes the two values \(\underline f\) and \(\overline f\), and because it must satisfy the endpoint condition \(\int_{s_0}^{s_1}f^*(s)ds=\Delta A\), the measure of the set on which \(f^*=\overline f\) is necessarily \(m\). The minimization statement follows by applying the same argument to \(-\varphi\).

\section{The Main Result}\label{sec:mainresult}

\Cref{thm:invpt-extremal} identifies the bound-producing form the inverse pass-through function \(u\) takes, \textit{for a given demand curve \(Q\)}. \secref{sec:demand} then characterizes the corresponding demand-side extrema: when the induced welfare weight is one-signed, the KV extremal curves deliver the bounds; and when the weight is single-crossing, \Cref{prop:ass3} shows that a curve that EATS is extremal. It remains to combine these findings. Because the supply-side cutoffs are pinned down by the endpoints and pass-through bounds, and do not depend on the particular demand curve beyond monotonicity, we can first restrict attention to the two extremal inverse pass-through functions and then solve the corresponding demand-side problem:
\begin{theorem}\label{thm:fourcorners}
Global welfare bounds are attained by candidates with the following extremal structure:
\begin{enumerate}[label={(\roman*)},noitemsep,topsep=0pt]
\item \(u\) takes the \Cref{thm:invpt-extremal} form, i.e., \(u\) is either \(\beta\)- or \(\alpha\)-BATS; and
\item conditional on this \(u\), the welfare object, and \(\kappa\), \(Q\) corresponds to an \(f\) that EATS.
\end{enumerate}
Moreover, \(\kappa\) is also extremal: \(\kappa\in\left\{\kappa_L,\kappa_U\right\}\).
\end{theorem}
\begin{proof}
Take a welfare object. For demand curve \(Q\) satisfying Assumption 3 and conduct value \(\kappa\), the only term depending on \(u\) is a scalar multiple of \(\int_{p_0}^{p_1}Q(p)u(p)dp\). \Cref{thm:invpt-extremal}, therefore, implies that, conditional on \(Q\) and \(\kappa\), an extremum over \(u\) is attained by one of the two BATS inverse pass-through functions, \(u_{\max}\) or \(u_{\min}\). Conditional on either such \(u\), and on \(\kappa\), the welfare object can be written as
\[
W_\omega(Q)=C_0+\int_{s_0}^{s_1}\varphi_\omega(s)A^{-1}(G(s))ds,
\]
as in \secref{sec:demand}. If \(\varphi_\omega\) is one-signed, the KV extremal curve solves the demand-side problem; whereas if \(\varphi_\omega\) is single-crossing, \Cref{prop:ass3} delivers an extremum that EATS. Hence, an extremum is attained by a candidate with \(u\) BATS and \(Q\) corresponding to an \(f\) that EATS. Finally, for every fixed \((Q,u)\), the welfare expressions are linear in \(\kappa\). Thus, whenever the welfare object depends on \(\kappa\), an extremum over \([\kappa_L,\kappa_U]\) is attained at \(\kappa_L\) or \(\kappa_U\).
\end{proof}

Given any selected candidate \((Q,u,\kappa)\), define the implied tax schedule by
\[
\tau_u(p)\coloneqq\tau_0+\int_{p_0}^p u(r)dr.
\]
Where \(P'\) exists, the implied marginal-cost curve is
\[
\mathrm{MC}(Q(p))\coloneqq p-\tau_u(p)-\kappa\left[-P'(Q(p))Q(p)\right].
\]

\subsection{A Robust-Bounds ``Cookbook''}

You, the analyst, observe two equilibria \((p_0,Q_0,\tau_0)\) and \((p_1,Q_1,\tau_1)\) with \(p_1>p_0\), \(Q_1<Q_0\), \(\Delta p=p_1-p_0\), and \(\Delta\tau=\tau_1-\tau_0>0\).
You posit a pass-through interval \(\rho_L\le\rho\le\rho_U\) and a conduct interval \(\kappa_L\le\kappa\le\kappa_U\).
Recall the inverse pass-through bounds
\[
\alpha\coloneqq\frac{1}{\rho_U},\qquad \beta\coloneqq\frac{1}{\rho_L},\quad \text{and} \quad \alpha\le\beta\text{.}
\]

\begin{enumerate}
\item \textbf{Compute the extremal inverse pass-through functions.} \Cref{thm:invpt-extremal} gives the two extremal inverse pass-through functions
\[
u_{\max}(p)=
\begin{cases}
\beta,& p\in\left[p_0,p^{*}_{\max}\right),\\
\alpha,& p\in\left[p^{*}_{\max},p_1\right],
\end{cases}
\quad \text{and} \quad
u_{\min}(p)=
\begin{cases}
\alpha,& p\in\left[p_0,p^{*}_{\min}\right),\\
\beta,& p\in\left[p^{*}_{\min},p_1\right]\text{,}
\end{cases}
\]
with corresponding cutoffs \(p^{*}_{\max}=p_0 + h\) and \(p_{\min}^{*}= p_1-h\). You will run \emph{both} extremals in Steps~\ref{step:demand}-\ref{step:fourcorners}.

\item\label{step:demand} \textbf{Compute the demand-side extrema.}
For each welfare object, each \(u\in\left\{u_{\max},u_{\min}\right\}\), and each relevant conduct endpoint \(\kappa\in\left\{\kappa_L,\kappa_U\right\}\), form the induced welfare weight \(\varphi\) as in \secref{sec:demand}. If \(\varphi\) is one-signed, use the corresponding KV extremal curve, which is a one-switch EATS curve. If \(\varphi\) is single-crossing, \Cref{prop:ass3} furnishes an EATS extremizer. Use the maximizing or minimizing demand-side extremizer according to the welfare bound being computed.

\item\label{step:three} \textbf{Compute the building blocks for the welfare items.} Compute
\[
\int_{p_0}^{p_1}Q(p)u(p)dp \quad \text{and} \quad \int_{p_0}^{p_1}Q(p)dp\text{,}
\]
for each relevant candidate pair \((Q,u)\) from Steps~1 and~\ref{step:demand}.

\item\label{step:fourcorners} \textbf{Compute welfare pieces.} Recall the formulas
\[
\Delta \mathrm{CS}=-\int_{p_0}^{p_1}Q(p)dp,\qquad \Delta \mathrm{PS}
=-\Delta \mathrm{CS}
-\int_{p_0}^{p_1}Q(p)u(p)dp
-\int_{p_0}^{p_1}\kappa(Q)Q(p)dp\text{,}
\]
\[
\Delta \mathrm{TS}_{\mathrm{priv}}\coloneqq\Delta \mathrm{CS}+\Delta \mathrm{PS}\quad \text{and} \quad -\mathrm{DWL}=\Delta \mathrm{TS}\coloneqq\Delta \mathrm{CS}+\Delta \mathrm{PS}+\Delta \mathrm{TR}\text{.}
\]
Compute each object, or any convex combination thereof, under the relevant candidates. For lower PS/TS and higher DWL, set \(\kappa=\kappa_U\); and for upper PS/TS and lower DWL, set \(\kappa=\kappa_L\). If \(\kappa\) is point-identified, just plug it in.
\end{enumerate}

\bibliography{sample.bib}

\appendix

\section{Mathematical Conventions and External Results}

For an interval \(\left[s_0,s_1\right]\), let \(\lambda\) denote the Lebesgue measure on \(\left[s_0,s_1\right]\).
For \(1\le p<\infty\), \(L^p\left(\left[s_0,s_1\right]\right)\) denotes the space of (equivalence classes of) measurable functions \(g\) such that \(\int_{s_0}^{s_1}|g(s)|^p ds<\infty\).
The space \(L^\infty\left(\left[s_0,s_1\right]\right)\) consists of essentially bounded measurable functions.
Since \(\lambda\left(\left[s_0,s_1\right]\right)<\infty\), we have \(L^\infty\left(\left[s_0,s_1\right]\right)\subset L^1\left(\left[s_0,s_1\right]\right)\).

All equalities and inequalities between functions (e.g., \(\underline f\le f\le\overline f\)) are understood to hold \(\lambda\)-a.e.
For a measurable set \(H\subseteq\left[s_0,s_1\right]\), \(1_H\) denotes its indicator function. We write \(\mathrm{AC}\left(\left[s_0,s_1\right]\right)\) for the absolutely continuous functions on the interval \(\left[s_0,s_1\right]\).
For a measurable \(H\subseteq[s_0,s_1]\), \(\partial H\) denotes its topological boundary.

In the proof of \Cref{thm:invpt-extremal}, we use the following inequality--see, e.g., Proposition 1.6 in \cite{burchard2009short} for a reference. Let \(a\) and \(b\) be generic nonnegative measurable functions \(a,b \colon \left[s_0,s_1\right] \to \left[0,\infty\right]\). We define \(a\)'s distribution function
\[\mu_a(t) \coloneqq \lambda\left(\left\{s \in \left[s_0,s_1\right] \colon a(s) > t\right\}\right) \qquad \forall \  t \ge 0.\]
The \textit{decreasing rearrangement} \(a^{\downarrow}\) is the (a.e. unique) nonincreasing function on \(\left[s_0,s_1\right]\) that is
equimeasurable with \(a\), meaning that
\[\lambda\left(\left\{s \in \left[s_0,s_1\right] \colon a^{\downarrow}(s) > t\right\}\right) = \mu_a(t) \qquad \forall \ t \ge 0.\]
Equivalently, since \(\lambda\left(\left[s_0,s_1\right]\right)=s_1-s_0\), we may write
\[a^{\downarrow}(s) \coloneqq \inf\left\{t \ge 0 \colon \mu_a(t) \le s - s_0\right\} \qquad \forall \ s \in \left[s_0,s_1\right],\]
and define \(b^{\downarrow}\) analogously.

\begin{lemma}[Hardy-Littlewood Rearrangement Inequality]\label{hwrearrange}
    If \(a,b\ge 0\) are measurable, then
\[
\int a b\ \le \int a^\downarrow b^\downarrow,
\]
with equality when \(a,b\) are comonotone.
\end{lemma}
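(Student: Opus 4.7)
The plan is to reduce both sides to integrals of measures of super-level sets via the layer-cake representation, and then exploit the fact that decreasing rearrangements align their super-level sets as much as possible. Concretely, for any nonnegative measurable \(f\), write \(f(x)=\int_0^\infty \mathbf{1}_{\{f>t\}}(x)\,dt\). Substituting this for both \(a\) and \(b\) and applying Tonelli gives
\[
\int a(x)b(x)\,dx \;=\; \int_0^\infty\!\!\int_0^\infty \bigl|\{a>t\}\cap\{b>s\}\bigr|\,dt\,ds,
\]
and an identical identity holds for the rearranged integrand with \(a\) and \(b\) replaced by \(a^\downarrow\) and \(b^\downarrow\).

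Next, I would exploit the structure of \(a^\downarrow\) and \(b^\downarrow\). By construction they are decreasing functions whose distribution functions match those of \(a\) and \(b\), so their super-level sets are nested intervals of the form \([0,\mu_a(t))\) and \([0,\mu_b(s))\), where \(\mu_a(t)\coloneqq|\{a>t\}|\) and \(\mu_b(s)\coloneqq|\{b>s\}|\). Consequently,
\[
\bigl|\{a^\downarrow>t\}\cap\{b^\downarrow>s\}\bigr| \;=\; \min\bigl(\mu_a(t),\mu_b(s)\bigr).
\]
For arbitrary (not necessarily comonotone) \(a,b\) one has the trivial bound \(|\{a>t\}\cap\{b>s\}|\le \min(\mu_a(t),\mu_b(s))\). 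Plugging this into the double integral yields the inequality \(\int ab\le \int a^\downarrow b^\downarrow\).

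For the equality case when \(a,b\) are comonotone, note that comonotonicity makes the super-level sets \(\{a>t\}\) and \(\{b>s\}\) nested for every pair \((t,s)\): one is contained in the other. The measure of their intersection therefore equals the minimum of their measures, matching the rearranged integrand pointwise in \((t,s)\). Integrating gives equality. I expect the main obstacle to be purely bookkeeping: stating the decreasing rearrangement on the appropriate (possibly bounded) measure space so that the ``super-level sets are intervals starting at \(0\)'' picture is literally correct, and handling measurability/integrability subtleties so Tonelli is legitimate. With \(a,b\ge 0\) nonnegative and measurable as assumed, both of these are standard and the argument goes through without further complications.
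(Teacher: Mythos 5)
Your proposal is correct, and it is essentially the proof the paper relies on: the paper does not prove this lemma itself but cites Burchard's short course (Proposition 1.6), where the argument is exactly this layer-cake reduction to \(\lvert\{a>t\}\cap\{b>s\}\rvert\le\min\bigl(\lvert\{a>t\}\rvert,\lvert\{b>s\}\rvert\bigr)\) together with the observation that super-level sets of decreasing rearrangements are nested initial intervals. Your treatment of the equality case via nestedness of super-level sets under comonotonicity is also the standard one, and the bookkeeping caveats you flag (domain \([p_0,p_1]\) rather than \([0,\infty)\), Tonelli for nonnegative integrands) are indeed routine.
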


\section{Omitted Proofs}\label{appendix}

\subsection{\Cref{rem:cov} and \Cref{lem:PS} Proof}

Recall that the remark is the simple equation
\[
\int_{\tau_0}^{\tau_1} Q(p(\tau)) d\tau = \int_{p_0}^{p_1} Q(p) u(p) dp\text{.}\]

\bigskip

\begin{proof}[Proof of \Cref{rem:cov}]
\Cref{ass:invpt} implies \(p\) has an inverse \(\tau(\cdot)\) with \(\tau'(p)=1/p'(\tau(p))=1/\rho(\tau(p))=u(p)\) a.e. Thus, 
\[
\int_{\tau_0}^{\tau_1} Q(p(\tau))d\tau = \int_{p_0}^{p_1} Q(p) \tau^{\prime}(p)dp = \int_{p_0}^{p_1} Q(p)u(p) dp\text{,}\]
yielding the claim.
\end{proof}

\bigskip

Then, \Cref{lem:PS} is the equation
\[\Delta \mathrm{PS}
= \int_{p_0}^{p_1} Q(p) dp
 - \int_{p_0}^{p_1} Q(p) u(p) dp
 - \int_{p_0}^{p_1}\kappa(Q) Q(p) dp\text{.}\]

 \bigskip
 
\begin{proof}[Proof of \Cref{lem:PS}]
We differentiate \(\Pi\):
\[
\frac{d\Pi}{d\tau}=\frac{d}{d\tau}\left[\left(p-\tau\right)Q\right]-\frac{d}{d\tau}\left(\int_0^{Q}MC\right)
=\left(p'(\tau)-1\right)Q+\left(p-\tau\right)Q'(\tau)-\mathrm{MC}(Q)Q'(\tau)\text{.}
\]
We set \(p'(\tau)=\rho\). By \eqref{eq:conduct}, \(p-\tau-\mathrm{MC}(Q)=\kappa(Q)[-P'(Q)Q]\) and by the chain rule \(Q'(\tau)=\frac{dQ}{dp}p'(\tau) =(1/P'(Q))\rho\), so,
\[
\frac{d\Pi}{d\tau}=(\rho-1)Q+\kappa(Q)\left[-P'(Q)Q\right]\frac{\rho}{P'(Q)}=(\rho-1)Q-\kappa(Q) Q \rho\text{.}
\]

We integrate this expression over \(\tau\in\left[\tau_0,\tau_1\right]\), yielding
\[
\Delta \mathrm{PS}=\int_{\tau_0}^{\tau_1}(\rho-1)Q d\tau-\int_{\tau_0}^{\tau_1}\kappa(Q)Q \rho d\tau\text{.}
\]
Finally, we use \(dp=\rho d\tau\) and \Cref{rem:cov} to obtain
\[\int \rho Q d\tau=\int Q dp, \quad \int Q d\tau=\int Q u dp \quad \text{and} \quad \int \kappa(Q)Q \rho d\tau=\int \kappa(Q)Q dp\text{.}\] Plugging these in yields \eqref{eq:PSmaster}.
\end{proof}

\subsection{\Cref{thm:invpt-extremal} Proof}
\begin{proof}[Proof of \Cref{thm:invpt-extremal}]
Recalling that \(\alpha<\beta\), set
\[w(p)\coloneqq\frac{u(p)-\alpha}{\beta-\alpha},\] so that \(0\le w\le1\) a.e. and
\[
\int_{p_0}^{p_1}w(p)dp
=
h\coloneqq\frac{\Delta\tau-\alpha\Delta p}{\beta-\alpha}
\in\left[0,\Delta p\right].
\]
Moreover,
\[
F(u)=\alpha\int_{p_0}^{p_1}Q(p)dp+\left(\beta-\alpha\right)\int_{p_0}^{p_1}Q(p)w(p)dp.
\]
Accordingly, optimizing \(F\) over \(\mathcal U\) is equivalent to optimizing
\[
\Phi(w)\coloneqq\int_{p_0}^{p_1}Q(p)w(p)dp
\]
over all measurable \(w\) satisfying \(0\le w\le1\) a.e. and \(\int w=h\).

We solve the maximization problem--the minimization problem is analogous, applying the same maximization argument to \(1-w\). For \(t\in\left[0,1\right]\), let
\[
E_t\coloneqq\left\{p\in\left[p_0,p_1\right]\colon w(p)>t\right\}.
\]
By the layer-cake representation,
\[
w(p)=\int_0^1\mathbf 1_{E_t}(p)dt
\quad\text{a.e.}
\]
and so, by Tonelli's theorem,
\[
\Phi(w)=\int_0^1\int_{E_t}Q(p)dp\,dt.
\]

For each \(t\), applying \Cref{hwrearrange} on \(\left[p_0,p_1\right]\) to \(Q\) and \(\mathbf 1_{E_t}\), we obtain
\[
\int_{E_t}Q(p)dp
\le
\int_{p_0}^{p_0+\lambda(E_t)}Q(p)dp,
\]
as \(Q\) is weakly decreasing and the decreasing rearrangement of \(\mathbf 1_{E_t}\) is \(\mathbf 1_{\left[p_0,p_0+\lambda(E_t)\right]}\), up to null sets.

For \(x \in \left[0,\Delta p\right]\), we define \(H(x)\coloneqq\int_{p_0}^{p_0+x}Q(p)dp\). Since \(Q\) is weakly decreasing, \(H\) is concave. Therefore, by Jensen's inequality and the layer-cake identity,
\[
\Phi(w)
\le
\int_0^1H\left(\lambda(E_t)\right)dt
\le
H\left(\int_0^1\lambda(E_t)dt\right)
=
H\left(\int_{p_0}^{p_1}w(p)dp\right)
=
H(h).
\]
Thus,
\[
\Phi(w)\le\int_{p_0}^{p_0+h}Q(p)dp,
\]
with equality at \(w=\mathbf 1_{\left[p_0,p_0+h\right)}\).

Returning to \(u=\alpha+\left(\beta-\alpha\right)w\), the maximizing and minimizing inverse pass-through functions are
\[
u_{\max}(p)=\alpha+\left(\beta-\alpha\right)\mathbf 1_{\left[p_0,p_0+h\right)}(p) \qquad \text{and} \qquad
u_{\min}(p)=\alpha+\left(\beta-\alpha\right)\mathbf 1_{\left[p_1-h,p_1\right]}(p).
\]
Therefore,
\[
p_{\max}^*=p_0+h
=
p_0+\frac{\Delta\tau-\alpha\Delta p}{\beta-\alpha} \quad \text{and} \quad
p_{\min}^*=p_1-h
=
p_1-\frac{\Delta\tau-\alpha\Delta p}{\beta - \alpha}. \qedhere
\]
\end{proof}

\subsection{Proof of \Cref{prop:ass3}}

We say that a quadruple of real numbers \(\left(a,b,g_a,g_b\right)\), with \(a<b\) is \textit{feasible} if
\[
\underline f(b-a)\le g_b-g_a\le\overline f(b-a).
\]
For a feasible quadruple, we define the upper and lower envelopes
\[
\overline G(s)\coloneqq\min\left\{g_a+\overline f(s-a),g_b-\underline f(b-s)\right\} \quad \text{and} \quad \underline G(s)\coloneqq\max\left\{g_a+\underline f(s-a),g_b-\overline f(b-s)\right\}.
\]

\begin{lemma}\label{lem:grad-envelope}
Fix a feasible quadruple. Then:
\begin{enumerate}
    \item\label{lemmait1} The upper and lower envelopes \(\overline G,\underline G\in AC([a,b])\), and satisfy \(\overline G(a)=\underline G(a)=g_a\), \(\overline G(b)=\underline G(b)=g_b\), and \(\underline G', \overline G' \in \left\{\underline f, \overline f\right\}\) a.e.
    \item\label{lemmait2} For any \(G\in AC([a,b])\) with \(G(a)=g_a\), \(G(b)=g_b\), and \(\underline f\le G'\le\overline f\) a.e., we have \(\underline G(s)\le G(s)\le\overline G(s)\) for all \(s\in[a,b]\).
    \item\label{lemmait3} The upper (lower) envelope \(\overline G'\) (\(\underline G'\)) is equal to \(\overline f\) then \(\underline f\) (\(\underline f\) then \(\overline f\)).
\end{enumerate}
\end{lemma}
\begin{proof}
    \ref{lemmait1} is immediate. For \ref{lemmait2}, take arbitrary \(G\in AC([a,b])\) with \(G(a)=g_a\), \(G(b)=g_b\), and \(\underline f\le G'\le\overline f\) a.e. By the fundamental theorem of calculus, we have \[g_a+\underline f(s-a)\le G(s)\le g_a+\overline f(s-a) \quad \text{and} \quad g_b-\overline f(b-s)\le G(s)\le g_b-\underline f(b-s).\]
    Consequently, \(\underline{G} \leq G \leq \overline{G}\) on \([a,b]\). Finally, we deduce \ref{lemmait3} from \(\overline{G}\) (\(\underline G\)) being the minimum (maximum) of two affine functions.
\end{proof}

\begin{proof}[Proof of \Cref{prop:ass3}]
We prove maximization--minimization follows by replacing \(\varphi\) with \(-\varphi\). If \(m=0\), we must have \(f=\underline f\) a.e.; and when \(m=L\) necessarily \(f=\overline f\). In both cases the claim is immediate, and so we now assume \(0<m<L\).

Recall that \(s_\varphi\in(s_0,s_1)\) is \(\varphi\)'s ``crossing point,'' and define the set of feasible intermediate values
\[
\Gamma\coloneqq\left[A(Q_0)+\underline f(s_\varphi-s_0),\,A(Q_0)+\overline f(s_\varphi-s_0)\right]\cap\left[A(Q_1)-\overline f(s_1-s_\varphi),\,A(Q_1)-\underline f(s_1-s_\varphi)\right].
\]
If \(f\in\mathcal F\), then \(\gamma_f\coloneqq G_f(s_\varphi)\in\Gamma\). In particular, since \(\mathcal F\neq\emptyset\), we have \(\Gamma\neq\emptyset\).

Take arbitrary \(\gamma\in\Gamma\) and define \(G_\gamma\in AC([s_0,s_1])\) by gluing the two-endpoint envelopes from \Cref{lem:grad-envelope} at \(s_\varphi\), with endpoints matched at \(\gamma\).
There are two cases.

\medskip
\noindent
Case 1: \(\varphi>0\) a.e. on \([s_0,s_\varphi)\) and \(\varphi<0\) a.e. on \((s_\varphi,s_1]\).
Define
\[
G_\gamma(s)\coloneqq
\begin{cases}
\min\left\{A(Q_0)+\overline f(s-s_0),\,\gamma-\underline f(s_\varphi-s)\right\}, \quad &\text{if} \quad s\in[s_0,s_\varphi],\\[4pt]
\max\left\{\gamma+\underline f(s-s_\varphi),\,A(Q_1)-\overline f(s_1-s)\right\}, \quad &\text{if} \quad s\in[s_\varphi,s_1].
\end{cases}
\]
Let \(f_\gamma\coloneqq(G_\gamma)'\) a.e. Applying \Cref{lem:grad-envelope} on each subinterval, \(f_\gamma\in\mathcal F\), and for any \(f\in\mathcal F\) with \(G_f(s_\varphi)=\gamma\),
\[
G_f(s)\le G_\gamma(s)\ \text{for all }s\in[s_0,s_\varphi],
\qquad \text{and} \qquad
G_f(s)\ge G_\gamma(s)\ \text{for all }s\in[s_\varphi,s_1].
\]
Since \(A^{-1}\) is increasing, necessarily
\[
\int_{s_0}^{s_\varphi}\varphi(s)A^{-1}\left(G_f(s)\right) ds\le\int_{s_0}^{s_\varphi}\varphi(s)A^{-1}\left(G_\gamma(s)\right) ds,
\]
and, because \(\varphi<0\) a.e. on \((s_\varphi,s_1]\),
\[
\int_{s_\varphi}^{s_1}\varphi(s)A^{-1}\left(G_f(s)\right) ds\le\int_{s_\varphi}^{s_1}\varphi(s)A^{-1}\left(G_\gamma(s)\right) ds.
\]
Therefore, \(I(f)\le I(f_\gamma)\) among all \(f\in\mathcal F\) with \(G_f(s_\varphi)=\gamma\).
Moreover, \(f_\gamma\) EATS.

\medskip
\noindent
Case 2: \(\varphi<0\) a.e. on \([s_0,s_\varphi)\) and \(\varphi>0\) a.e. on \((s_\varphi,s_1]\).
Repeat the construction but swap the envelope roles on each side of \(s_\varphi\).
The same monotonicity argument implies \(I(f)\le I(f_\gamma)\) among all \(f\in\mathcal F\) with \(G_f(s_\varphi)=\gamma\). Likewise, \(f_\gamma\) EATS.

\medskip
\noindent
We conclude that for each \(\gamma\in\Gamma\), the best \(f\) conditional on \(G_f(s_\varphi)=\gamma\) can be chosen to satisfy the \Cref{prop:ass3} structure. Accordingly,
\(\sup_{f\in\mathcal F}I(f)=\sup_{\gamma\in\Gamma}I(f_\gamma)\). 

\begin{claim}
    The function \(\gamma\mapsto I(f_\gamma)\) is Lipschitz continuous on \(\Gamma\). 
\end{claim}
\begin{proof}
    Since \(f_\gamma=G_\gamma'\) a.e. and \(G_\gamma(s_0)=A\left(Q_0\right)\), we have \(G_{f_\gamma}=G_\gamma\). Fix \(\gamma,\tilde\gamma\in\Gamma\). By construction, for each \(s\), \(G_\gamma(s)\) is obtained from affine functions of \(\gamma\), with coefficients in \(\left\{0,1\right\}\), using only \(\min\) and \(\max\). Hence,
\[
\left|G_\gamma(s)-G_{\tilde\gamma}(s)\right|\le\left|\gamma-\tilde\gamma\right| \quad \forall \ s\in\left[s_0,s_1\right].
\]
Moreover, \(G_\gamma'(s)\in\left\{\underline f,\overline f\right\}\) a.e. and \(\overline f\le0\), so \(G_\gamma\) is nonincreasing and
\(G_\gamma(s)\in\left[A\left(Q_1\right),A\left(Q_0\right)\right]\) for all \(s\in\left[s_0,s_1\right]\). Since \(A'\) is continuous and positive on \(\left[Q_1,Q_0\right]\), \(A^{-1}\) is Lipschitz on \(\left[A\left(Q_1\right),A\left(Q_0\right)\right]\). Let
\(L_A\coloneqq\frac{1}{\min_{q\in\left[Q_1,Q_0\right]}A'\left(q\right)}\). Then,
\[
\begin{aligned}
\left|I(f_\gamma)-I(f_{\tilde\gamma})\right|
&\le\int_{s_0}^{s_1}\left|\varphi(s)\right|\left|A^{-1}\left(G_\gamma(s)\right)-A^{-1}\left(G_{\tilde\gamma}(s)\right)\right|ds\\
&\le L_A\left\|\varphi\right\|_{1}\left|\gamma-\tilde\gamma\right|.
\end{aligned}
\]
Since \(\varphi\in L^1\left(\left[s_0,s_1\right]\right)\), this proves that \(\gamma\mapsto I(f_\gamma)\) is Lipschitz continuous on \(\Gamma\).
\end{proof}

The function \(\gamma\mapsto I(f_\gamma)\) is continuous on \(\Gamma\), and \(\Gamma\) is compact, so the supremum is attained at some \(\gamma^*\in\Gamma\).
Let \(f^*\coloneqq f_{\gamma^*}\). Finally, since \(f^*(s)\in\left\{\underline f,\overline f\right\}\) a.e. and \(\int_{s_0}^{s_1}f^*(s) ds=\Delta A\), we have
\(\Delta A=\int_{s_0}^{s_1}f^*(s) ds=\underline f L+\left(\overline f-\underline f\right)\lambda(S_{f^*})\),
so \(\lambda(S_{f^*})=m\).
\end{proof}

\end{document}